\it\color{BurntOrange},
\newcommand{\F}{\mathbb{F}_2}
\newcommand{\N}{\mathbb{N}}
\newcommand{\ket}[1]{|#1\rangle}
\newcommand{\bra}[1]{\langle#1|}
\newcommand{\cnot}{\mathrm{CNOT}}
\newcommand{\x}{\mathbf{x}}
\title{Sized Types for low-level Quantum Metaprogramming}
\author{Matthew Amy\inst{1}\orcidID{0000-0003-3514-420X}}
\institute{University of Waterloo, Waterloo, Canada \\ \email{meamy@uwaterloo.ca}}
\begin{document}

\maketitle

\abstract{One of the most fundamental aspects of quantum circuit design is the concept of \emph{families} of circuits parametrized by an instance size. As in classical programming, metaprogramming allows the programmer to write entire families of circuits simultaneously, an ability which is of particular importance in the context of quantum computing as algorithms frequently use arithmetic over non-standard word lengths. In this work, we introduce metaQASM, a typed extension of the openQASM language supporting the metaprogramming of circuit families. Our language and type system, built around a lightweight implementation of \emph{sized types}, supports subtyping over register sizes and is moreover type-safe. In particular, we prove that our system is strongly normalizing, and as such any well-typed metaQASM program can be statically unrolled into a finite circuit.

\keywords{Quantum programming, Circuit description languages, \\ Metaprogramming.}
}

\section{Introduction}

Quantum computers have the potential to solve a number of important problems, including integer factorization \cite{s94}, quantum simulation \cite{l96}, approximating the Jones polynomial \cite{ajl06} and unstructured searching \cite{g96} asymptotically faster than the best known classical algorithms. These algorithms are typically described abstractly and make heavy use of classical arithmetic such as modular exponentiation. To make such algorithms concrete, efficient, reversible implementations of large swaths of a classical arithmetic and computation is needed -- moreover, due to the limited space constraints and special-purpose nature of quantum circuits, these operations are typically needed in a multitude of bit sizes.

In part due to the increasing viability of quantum computing and the scaling of NISQ \cite{p18} devices, there has been a recent explosion in quantum programming tools. Such tools range from software development kits (e.g., Qiskit \cite{qiskit}, ProjectQ \cite{sht18}, Strawberry Fields \cite{kiqbaw18}, Pyquil \cite{scz16}) to Embedded domain-specific languages (e.g., Quipper \cite{glrsv13}, Qwire \cite{prz17}, $Q\ket{SI}$ \cite{lwglhdy17}) and standalone languages and compilers (e.g., QCL \cite{o00}, QML \cite{ag05}, ScaffCC \cite{jpkhlcm15}, Q\# \cite{sgtaghkmpr18}). Going beyond strict programming tools, software for the synthesis, optimization, and simulation of quantum circuits and programs (e.g., Revkit \cite{revkit}, TOpt \cite{hc18}, Feynman \cite{feynman}, PyZX \cite{kw19}, Quantum++ \cite{qpp}, QX \cite{qx}) are becoming more and more abundant.

The proliferation of both hardware and software tools for quantum computing has in turn spurred a need for standardization and portability \cite{hsst18,mr19}. One such standard which has recently grown in popularity is the Quantum Assembly Language and its many various dialects (e.g., openQASM \cite{cbsg17}, QASM-HL \cite{jpkhlcm15}, cQASM \cite{kgahab18}). As a lightweight, modular language for specifying simple quantum circuits, programs with a well-defined syntax, QASM support -- in particular, for the openQASM dialect -- has been built-in to an increasingly large number of software tools, particularly standalone programs like circuit optimizers, as a way to support interoperability. 

One feature that is noticeably lacking in these dialects is the ability to define \emph{families} of quantum circuits parametrized over different register sizes, and by extension to \emph{generate} concrete instances. This creates a barrier for the use of QASM in writing portable libraries of quantum circuit families, particularly for classical operations such as arithmetic. As a result, software designers typically end up re-implementing code -- typically implemented in the host language for EDSLs, and hence not easily re-usable -- for generating instances of simple operations such as adders and multipliers. Alternatively, programmers resort to using other compilers such as Quipper, Q\# or ReVerC \cite{ars17} to generate individual instances, which complicates the compilation or simulation process. While recent progress towards the development of portable libraries of circuit families with high-level non-embedded languages, standardization remains an on-going process, and moreover a low-level approach is preferable in many situations, including as compilation targets and middle-ends.

In this paper we make progress towards the design of a low-level language for quantum programming that supports the metaprogramming of sized circuit families. In particular, we develop a typed extension of the untyped open quantum assembly language (openQASM) with metaprogramming over lightweight \emph{sized types} \`{a} la dependent ML \cite{x01}. Our language, metaQASM, is further shown to be type-safe and strongly-normalizing, while the non-meta fragment is both more expressive than openQASM and admits a simpler syntax, owing to the type system. For the purposes of this paper, we focus on the type system design and metatheory of such a language, leaving implementation to future work.

\subsection{Quantum metaprogramming}

Most QRAM-based quantum programming languages are metaprogramming languages -- called \emph{circuit description languages} -- in that they typically operate by building quantum circuits to be sent in a single batch to a quantum processor. Such quantum circuits can typically be composed, reversed, and depend on the result of classical computations.

In this paper, we are interested in a particular type of quantum circuit metaprogramming, wherein circuit families are parametrized over \emph{shapes} \cite{glrsv13,prz17}, such as the number of input qubits. Existing languages offer varying support for such metaprogramming, either implicitly (e.g., uniform or \emph{transversal} families of circuits in openQASM, iteration and qubit arrays in Q\#), or more explicitly (e.g., the generic \texttt{QData} type-class in Quipper, which can be instantiated via explicit type applications). Our approach differs from previous attempts by explicitly parametrizing registers and circuit families with \emph{size} parameters. We adopt a typed approach for a number of reasons:
\begin{itemize}
	\item it allows the light-weight verification of libraries of circuit generators,
	\item it provides a means of self-documentation, and
	\item it allows explicit generation of sized-specialized instances.
\end{itemize}
The ability to generate instances of circuit families in various sizes \emph{without executing them} is particularly important for the purposes of resource estimation, and for benchmarking tools that operate on fixed-size but arbitrary input circuits, such as circuit optimizers \cite{hsst18}.

As an illustration, given an in-place family of adders written in the style of (imperative) Quipper with the type

\medskip
\centerline{
	\texttt{inplace\_add :: [Qubit] -> [Qubit] -> Circ ()},
}
\medskip

one may wish to generate a static, optimized instance of \texttt{inplace\_add} operating on $2$-qubit registers, using an external circuit optimizer. Doing so requires the specialization to (and serialization of) a function

\medskip
\centerline{
	\texttt{inplace\_add2 :: (Qubit, Qubit) -> (Qubit, Qubit) -> Circ ()}.
}
\medskip

One possible method of generating such a function is to write the body of \texttt{inplace\_add2} using a call to the generic \texttt{inplace\_add} applied to the $4$ input qubits. However, this quickly gets unwieldy, both in the boilerplate code defining a particular instance, and in the large number of parameters.

A more common solution is to use \emph{dummy parameters}, whereby the generic  function is ``applied'' to lists of qubits, which are then taken by the serialization method as meaning arbitrary inputs. For instance, the following Quipper\footnote{The function \texttt{inplace\_add2} could instead be directly generated by writing the adder as \texttt{inplace\_add :: QData qa => qa -> qa -> Circ ()}, then specializing \texttt{qa} to the finite type \texttt{(Qubit, Qubit)} using \emph{type applications}. However, the non-generic serialization functions in Quipper appear to work only for small finite tuple types.} code \cite{quipper} prints out a PDF representation of \texttt{inplace\_add2} using dummy parameters \texttt{qubit :: Qubit}

\medskip
\centerline{
	\texttt{print\_generic PDF inplace\_add [qubit, qubit] [qubit, qubit]}.
}
\medskip

The use of dummy parameters is partly a question of style, though it can cause problems when combining optimizations with \emph{initialized} dummy parameters. In either case, the use explicitly sized circuit families carries further benefits to both readability and correctness \cite{prz17}.

\subsection{Organization}

The remainder of this paper is organized as follows. \Cref{sec:overview} gives a brief overview of quantum computing. \Cref{sec:qasm} reviews the openQASM language and defines a formal semantics for it. \Cref{sec:tqasm,sec:mqasm} extend openQASM with types and metaprogramming capabilities, and finally \Cref{sec:conclusion} concludes the paper.

\section{Quantum computing}\label{sec:overview}

We give a brief overview of the basics of quantum computing. For a more in-depth introduction of quantum computation we direct the reader to \cite{nc00}, while an overview of quantum programming can be found in \cite{g06}.

In the circuit model, the state of an $n$-qubit quantum system is described as a unit vector in a dimension $2^n$ complex vector space. The $2^n$ elementary basis vectors form the \emph{computational} basis, and are denoted by $\ket{\x}$ for bit strings $\x\in\{0,1\}^n$ -- these are called the \emph{classical} states. A general quantum state may then be written as a \emph{superposition} of classical states
\[
  \ket{\psi} = \sum_{\x\in\F^n} \alpha_{\x}\ket{\x},
\]
for complex $\alpha_{\x}$ and having unit norm. The states of two $n$ and $m$ qubit quantum systems $\ket{\psi}$ and $\ket{\psi}$ may be combined into an $n+m$ qubit state by taking their tensor product $\ket{\psi}\otimes\ket{\psi}$. If to the contrary the state of two qubits cannot be written as a tensor product the two qubits are said to be \emph{entangled}.

Quantum circuits, in analogy to classical circuits, carry qubits from left to right along \emph{wires} through \emph{gates} which transform the state. In the unitary circuit model gates are required to implement unitary operators on the state space -- that is, quantum gates are modelled by complex-valued matrices $U$ satisfying $UU^\dagger = U^\dagger U = I$, where $U^\dagger$ is the complex conjugate of $U$. As a result, unitary quantum computations must be \emph{reversible}, and in particular the quantum circuits performing classical computations are precisely the set of reversible circuits.

The standard universal quantum gate set, known as Clifford+$T$, consists of the two-qubit controlled-NOT gate ($\cnot$), and the single-qubit Hadamard ($H$) and $T$ gates. As quantum circuits implement linear operators, we may define the above three gates by their effect on classical states:
\[
	\cnot\ket{x}\ket{y} = \ket{x}\ket{x\oplus y},
	\qquad T\ket{x} = e^{\frac{2\pi i}{8}x}\ket{x},
\]
\[
	H\ket{x}=\frac{1}{\sqrt{2}}\sum_{x'\in\{0,1\}}(-1)^{x\cdot x'}\ket{x'}.
\]
Figure~\ref{fig:toffoli} gives a pictorial representation of a quantum circuit over $\cnot$, $H$, and $T$ gates. $\cnot$ gates are written as a solid dot on their first argument and an exclusive-OR symbol ($\oplus$) on their second argument.

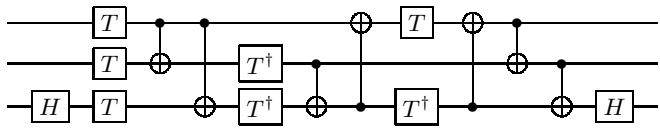
\begin{figure}[b]
\centerline{
\Qcircuit @C=1em @R=.3em {
& \qw & \gate{T} & \ctrl{1} & \ctrl{2} & \qw & \qw & \targ & \gate{T} & \targ & \ctrl{1} & \qw & \qw & \qw \\
& \qw & \gate{T} & \targ & \qw & \gate{T^\dagger} & \ctrl{1} & \qw & \qw & \qw & \targ & \ctrl{1} & \qw & \qw \\
& \gate{H} & \gate{T} & \qw & \targ & \gate{T^\dagger} & \targ & \ctrl{-2} & \gate{T^\dagger} & \ctrl{-2} & \qw & \targ & \gate{H} & \qw 
}
}
\caption{An example of a quantum circuit implementing the Toffoli gate.}
\label{fig:toffoli}
\end{figure}

More general quantum operations include qubit initialization and measurement, which effectively convert between classical and quantum data. As neither operation is unitary and hence not (directly) reversible, we regard them as functions of the classical computer rather than gates in a quantum circuit.

\section{openQASM}\label{sec:qasm}

The open quantum assembly language (openQASM \cite{cbsg17}) is a low-level, untyped imperative quantum programming language, developed as a dialect of the informal QASM language. One of the key additions of the openQASM language is that of \emph{modularity}, in the form of a simple module and import system. As this work is largely concerned with the question of \emph{making this modularity more powerful} -- specifically, to support the modular definition of entire circuit families -- we first give a brief overview of the openQASM language.

The official specification of openQASM can be found in \cite{cbsg17}. Programs in openQASM are structured as sequences of declarations and commands. Programmers can declare statically-sized classical or quantum registers, define unitary circuits (called \emph{gates} in openQASM), apply gates or circuits, measure or initialize qubits and condition commands on the value of classical bits. Gate arguments are restricted to individual qubits, where the application of gates to one or more register \emph{of the same size} is syntactic sugar for the application of a single gate in parallel across the registers. The listing below gives an example of an openQASM program performing quantum teleportation:
\begin{lstlisting}
OPENQASM 2.0;
qreg q[3];
creg c0[1];
creg c1[1];

h q[1];
cx q[1],q[2];
cx q[0],q[1];
h q[0];
measure q[0] -> c0[0];
measure q[1] -> c1[0];
if(c0==1) z q[2];
if(c1==1) x q[2];
\end{lstlisting}

We give a slightly different syntax from the above, and from the concrete syntax \cite{cbsg17}, as it will be more convenient and readable for our purposes. As is common in imperative languages, we leave some of the concrete syntactic classes of openQASM \cite{cbsg17} separate in our formalization -- since all operations in openQASM nominally have unit type, this allows terms with unitary and non-unitary \emph{effects} to be distinguished, without relying on an effect system or monadic types. In particular, terms of the class $U$ of unitary statements represent computations with purely unitary effects, while commands $C$ may have non-unitary effects, such as measurement. Statements of the form

\medskip
\centerline{
	\texttt{$E$($E_1,\dots,E_n$)}
} 
\medskip

represent the application of a unitary gate or named circuit $E$ to the (quantum) arguments $E_1$ through $E_n$. While the openQASM specification includes built-in \texttt{cx} (controlled-NOT) and parametrized single qubit gates \texttt{U}, we drop the parametrized \texttt{U} gate in favour of built-in Hadamard and $T/T^\dagger$ gates \texttt{h} and \texttt{t}/\texttt{tdg}, respectively. 

The commands \texttt{creg}, \texttt{qreg} and \texttt{gate} declare classical registers, quantum registers, and unitary circuits, respectively. The \texttt{if} statement differs from the formal openQASM definition by testing the value of a \emph{single} classical bit, rather than a classical register -- this was done to simplify the semantics of the language. Locations $l$ and values $V$ do not appear directly in openQASM programs, but are used to define the semantics. In particular, values of the form $(l_0, \dots, l_{I-1})$ denote registers and $\lambda x_1, \dots, x_n.U$ denote unitary circuits. We leave out a number of features of openQASM which are orthogonal to the extensions we describe here, namely classical arithmetic and the \texttt{barrier} and \texttt{opaque} terms. We also write parentheses around arguments and parameters.

\begin{figure}[t]
	\begin{tabular}{rrl}
		Identifier $x$ & & \\
		Index $I$ & $::=$ & $i\in\N$ \\
		Expression $E$ & $::=$ & $x$ $\mid$ $x[I]$ \\
		Unitary Stmt $U$ & $::=$ & \texttt{cx}$(E_1,E_2)$ $\mid$ \texttt{h}$(E)$ $\mid$ \texttt{t}$(E)$ $\mid$ \texttt{tdg}$(E)$ $\mid$ $E(E_1,\dots, E_n)$ $\mid$ \texttt{$U_1$; $U_2$} \\
		Command $C$ & $::=$ & \texttt{creg} $x[I]$ $ \mid$ \texttt{qreg} $x[I]$ $\mid$ \texttt{gate $x(x_1,\dots, x_n)$ \{ $U$ \}} \\
		& $\mid$ & \texttt{measure $E_1$ -> $E_2$} $\mid$ \texttt{reset} $E$ $\mid$ \texttt{$U$} \\
		& $\mid$ & \texttt{if($E$==$I$) \{ $U$ \}} $\mid$ \texttt{$C_1$; $C_2$} \\
		& & \\
		Location $l$ & $\in\N$ & \\
		Value $V$ & $::=$ & $(l_0, \dots, l_{I-1})$ $\mid$ $\lambda x_1, \dots, x_n.U$
	\end{tabular}
	\caption{openQASM (abstract) syntax}\label{fig:syntax}
\end{figure}

As no formal semantics of openQASM is given in \cite{cbsg17}, we define an operational semantics in \Cref{fig:semantics}. Our semantics is defined with respect to a \emph{configuration} $\langle S, \sigma, \eta, \ket{\psi}\rangle$, which stores a term $S$ taken from some syntactic class (e.g., $C$, $U$, $E$), an environment $\sigma$ which maps variables to values, a classical heap $\eta$ storing the value of the classical bits, and a quantum state $\ket{\psi}$. Gates applied to qubit $l$ of a quantum state are written by added a subscript to the intended gate, e.g.,
\[
	H_l\ket{\psi} = (I^{\otimes l-1}\otimes H \otimes I^{\otimes n - l})\ket{\psi}
\]
$\sigma[x\leftarrow v]$ denotes the environment mapping $x$ to $v$ or $\sigma(x)$ otherwise, and $S\{X/x\}$ denotes the substitution of $X$ for $x$ in $S$. We assume for convenience that no valid program will run out of classical memory or quantum bits. We say $\langle S, \sigma, \eta, \ket{\psi}\rangle\Downarrow v$ if $S$ reduces to $v$, where the form of $v$ depends on the syntactic class of $S$ -- for instance, expressions evaluate to locations, arrays or circuits while commands produce a new environment, heap and quantum state. Note that we use a call-by-name evaluation strategy, as openQASM has only globally scoped variables.

Rather than give a full probabilistic reduction system to account for measurement probabilities, it suffices for our purposes to make the semantics non-deterministic. In particular, rules are given for both of the possible measurement outcomes in \texttt{measure $E_1$ -> $E_2$}, setting the classical bit to the result $c\in\{0,1\}$ and non-destructively applying the projector $P^c=\ket{c}\bra{c}$ (appropriately normalized) to the measured qubit.

\begin{figure}

Expressions:
	\begin{gather*}
		\inference{x\in\textsf{dom}(\sigma)}{\langle x, \sigma, \eta, \ket{\psi}\rangle\Downarrow \sigma(x)} \qquad
		\inference{\langle x, \sigma, \eta, \ket{\psi}\rangle\Downarrow (l_0, \dots, l_{I'}) \qquad I\leq I'}{\langle x[I], \sigma, \eta, \ket{\psi}\rangle\Downarrow l_I}
	\end{gather*}
Unitary statements:
	\begin{gather*}
		\inference{\langle E, \sigma, \eta, \ket{\psi}\rangle\Downarrow l}
			{\langle \texttt{h}(E), \sigma,\eta, \ket{\psi}\rangle \Downarrow H_{l}\ket{\psi}}	
		\inference{\langle E, \sigma, \eta, \ket{\psi}\rangle\Downarrow l}
			{\langle \texttt{t}(E), \sigma,\eta, \ket{\psi}\rangle \Downarrow T_{l}\ket{\psi}}	
		\inference{\langle E, \sigma, \eta, \ket{\psi}\rangle\Downarrow l}
			{\langle \texttt{tdg}(E), \sigma,\eta, \ket{\psi}\rangle \Downarrow T_{l}^\dagger\ket{\psi}} \\
		\inference{\langle E_1, \sigma, \eta, \ket{\psi}\rangle\Downarrow l_1 \quad \langle E_2, \sigma, \eta, \ket{\psi}\rangle\Downarrow l_2}
			{\langle \texttt{cx}(E_1, E_2), \sigma, \eta, \ket{\psi}\rangle\Downarrow \cnot_{l_1, l_2}\ket{\psi}} \;
		\inference{\langle E, \sigma, \eta, \ket{\psi}\rangle\Downarrow \lambda x_1,\dots,x_n.U, \\ 
			\langle U\{E_1/x_1, \dots, E_n/x_n\}, \sigma, \eta, \ket{\psi}\rangle \Downarrow \ket{\psi'}}
			{\langle E(E_1,\dots, E_n), \sigma, \eta, \ket{\psi}\rangle \Downarrow \ket{\psi'}} \\
		\inference{\langle U_1, \sigma, \eta, \ket{\psi}\rangle \Downarrow \ket{\psi'} \qquad 
			\langle U_2, \sigma, \eta, \ket{\psi'}\rangle \Downarrow \ket{\psi''}}
			{\langle \texttt{$U_1$; $U_2$}, \sigma, \eta, \ket{\psi}\rangle \Downarrow\ket{\psi''}}
	\end{gather*}
Commands:
	\begin{gather*}
		\inference{l_0,\dots, l_{I-1} \text{ are fresh heap indices}}
			{\langle\texttt{creg $x[I]$}, \sigma, \eta, \ket{\psi}\rangle \Downarrow \langle \sigma[x\leftarrow (l_0,\dots, l_{I-1})],\eta,\ket{\psi}\rangle} \\
		\inference{l_0,\dots, l_{I-1} \text{ are fresh qubit indices}}
			{\langle\texttt{qreg $x[I]$}, \sigma,\eta, \ket{\psi}\rangle \Downarrow \langle \sigma[x\leftarrow (l_0,\dots, l_{I-1})],\eta,\ket{\psi}\rangle} \\
		\inference{}
			{\langle\texttt{gate $x(x_1,\dots, x_n)$ \{ $U$ \}}, \sigma,\eta, \ket{\psi}\rangle \Downarrow 
			\langle \sigma[x\leftarrow \lambda x_1,\dots,x_n.U],\eta,\ket{\psi}\rangle} \\
		\inference{\langle E_1, \sigma, \eta, \ket{\psi}\rangle\Downarrow l_1 \qquad \langle E_2, \sigma, \eta, \ket{\psi}\rangle\Downarrow l_2}				{\langle \texttt{measure $E_1$ -> $E_2$}, \sigma, \eta, \ket{\psi}\rangle \Downarrow 
					\langle\sigma, \eta[l_2\leftarrow 0], P_{l_1}^0\ket{\psi}\rangle} \\
		\inference{\langle E_1, \sigma, \eta, \ket{\psi}\rangle\Downarrow l_1 \qquad \langle E_2, \sigma, \eta, \ket{\psi}\rangle\Downarrow l_2}
			{\langle \texttt{measure $E_1$ -> $E_2$}, \sigma, \eta, \ket{\psi}\rangle \Downarrow 
					\langle\sigma, \eta[l_2\leftarrow 1], P_{l_1}^1\ket{\psi}\rangle} \\
		\inference{\langle E, \sigma, \eta, \ket{\psi}\rangle\Downarrow l}
			{\langle \texttt{reset $E$}, \sigma, \eta, \ket{\psi}\rangle \Downarrow \langle\sigma, \eta, P_{l}^0\ket{\psi}\rangle} \quad
		\inference{\langle E, \sigma, \eta, \ket{\psi}\rangle\Downarrow l \qquad \eta(l)\neq I}
			{\langle \texttt{if($E$==$I$) \{ $U$ \}}, \sigma, \eta, \ket{\psi}\rangle \Downarrow \langle \sigma, \eta, \ket{\psi}\rangle}\\
		\inference{\langle E, \sigma, \eta, \ket{\psi}\rangle\Downarrow l \qquad \eta(l)=I \\ 
			\langle U, \sigma, \eta, \ket{\psi}\rangle \Downarrow \ket{\psi'}}
			{\langle \texttt{if($E$==$I$) \{ $U$ \}}, \sigma, \eta, \ket{\psi}\rangle \Downarrow \langle \sigma, \eta, \ket{\psi'}\rangle} \\
		\inference{\langle C_1, \sigma, \eta, \ket{\psi}\rangle \Downarrow \langle \sigma', \eta', \ket{\psi'}\rangle \qquad 
			\langle C_2, \sigma', \eta', \ket{\psi'}\rangle \Downarrow \langle \sigma'', \eta'', \ket{\psi''}\rangle}
			{\langle \texttt{$C_1$; $C_2$}, \sigma, \eta, \ket{\psi}\rangle \Downarrow \langle \sigma'', \eta'', \ket{\psi''}\rangle}
	\end{gather*}
	\caption{openQASM semantics}\label{fig:semantics}
\end{figure}

\section{Adding types to QASM}\label{sec:tqasm}

Run-time errors may occur in syntactically valid openQASM programs in a number of ways -- particularly when either an array access is out of bounds and the program halts, or a classical (resp. quantum) location is used in a context when a quantum (resp. classical) location is expected. In the official openQASM specification, the latter error is eliminated by the requirement that only (global) variables can be declared as quantum registers may be used as arguments to gates, for instance. In either case however, it is desirable to check that an openQASM program \emph{will not go wrong}, as circuit simulations are frequently run on large, expensive supercomputers (e.g., \cite{hs17}).

In this section we developed a typed variant of openQASM, called typedQASM, which provably rules out such runtime errors. Moreover, the type system uses \emph{sized types} to eliminate out-of-bound accesses, which we later develop into the core of our metaprogramming type system. The use of a type system in this case actually allows \emph{more} valid programs to be written than the standard openQASM specification, as the type system allows us to remove some syntactic distinctions and instead make them in the type system. In particular, our type system allows registers and circuits to be passed as functions to other circuits, whereas the formal specification restricts circuit arguments to only individual qubits.

\begin{figure}[t]
	\begin{tabular}{rrl}
		Base types $\beta$ & $::=$ & \texttt{Bit} $\mid$ \texttt{Qbit} \\
		Types $\tau$ & $::=$ & $\beta$ $\mid$ $\beta[I]$ $\mid$ \texttt{Circuit}$(\tau_1,\dots, \tau_n)$ \\
		Command $C$ & $::=$ & $\dots$ $\mid$ \texttt{creg $x[I]$ in \{ $C$ \}} $\mid$ \texttt{qreg $x[I]$ in \{ $C$ \}} \\
		& $\mid$ & \texttt{gate $x(x_1:\tau_1,\dots, x_n:\tau_n)$ \{ $U$ \} in \{ $C$ \}}
	\end{tabular}
	\caption{typedQASM specification}\label{fig:typedsyntax}
\end{figure}

\Cref{fig:typedsyntax} gives the syntax of typedQASM. We only show the syntactic elements which are different from openQASM or otherwise new. To simplify our analysis, declarations are given explicit block scope, though we leave textual examples in the regular openQASM style of declaration. As the semantics of typedQASM is effectively identical, modulo the block scoping, to openQASM we don't explicitly give the semantics.

\subsection{The type system}

Figure~\ref{fig:typing} gives the rules of our type system. As is standard, the judgement $\Gamma\vdash S:\tau$ states that in the context $\Gamma$ consisting of pairs of identifiers and types, $S$ can be assigned type $\tau$. We overload $\vdash$ to allow environment judgements of the form $\vdash \sigma:\Gamma$ stating that the $\sigma$ maps identifiers $x$ to values of the type $\tau$ if $x:\tau\in\Gamma$.

\begin{figure}[h]
Environment:\vspace*{-1em}
	\begin{gather*}
		\inference{}{\vdash \cdot:\cdot} \quad 
		\inference{\vdash \sigma:\Gamma}{\vdash \sigma[x\leftarrow (l_0,\dots, l_{I-1})]:\Gamma, x:\beta[I]} \\
		\inference{\vdash \sigma:\Gamma \qquad
			\Gamma, x_1:\tau_1,\dots, x_n:\tau_n\vdash U:\texttt{Unit}}
			{\vdash \sigma[x\leftarrow \lambda x_1:\tau_1,\dots, x_n:\tau_n. U]:\Gamma,x:\texttt{Circuit$(\tau_1,\dots,\tau_n)$}}
	\end{gather*}
Expressions:
	\begin{gather*}
		\inference{x:\tau\in\Gamma}{\Gamma\vdash x : \tau} \quad 
		\inference{\Gamma \vdash x:\beta[I'] \qquad I \leq I'-1}{\Gamma\vdash x[I] : \beta} \quad
		\inference{\Gamma \vdash E:\beta[I'] \qquad I \leq I'}{\Gamma\vdash E:\beta[I]}
	\end{gather*}
Unitary statements:
	\begin{gather*}
		\inference{\Gamma\vdash E_1:\texttt{Qbit} \qquad \Gamma\vdash E_2:\texttt{Qbit}}
			{\Gamma\vdash \texttt{cx}(E_1,E_2):\texttt{Unit}} \quad
		\inference{\Gamma\vdash E:\texttt{Qbit} \qquad g\in \{\texttt{h},\texttt{t},\texttt{tdg}\}}{\Gamma\vdash g(E):\texttt{Unit}} \\
		\inference{\Gamma\vdash E:\texttt{Circuit}(\tau_1,\dots, \tau_n) \\
			\Gamma\vdash E_1:\tau_1 \quad \cdots \quad \Gamma\vdash E_n:\tau_n}
			{\Gamma\vdash E(E_1,\dots, E_n):\texttt{Unit}} \quad
		\inference{\Gamma\vdash U_1:\texttt{Unit} \qquad \Gamma\vdash U_2:\texttt{Unit}}{\Gamma\vdash \texttt{$U_1$; $U_2$}:\texttt{Unit}}
	\end{gather*}
Commands:
	\begin{gather*}
		\inference{\Gamma,x:\texttt{Bit}[I]\vdash C:\texttt{Unit}}{\Gamma\vdash \texttt{creg $x[I]$ in \{ $C$ \}}:\texttt{Unit}} \quad
		\inference{\Gamma,x:\texttt{Qbit}[I]\vdash C:\texttt{Unit}}{\Gamma\vdash \texttt{qreg $x[I]$ in \{ $C$ \}}:\texttt{Unit}} \\
		\inference{\Gamma, x_1:\tau_1,\dots,x_n:\tau_n\vdash U:\texttt{Unit} 
			\quad \Gamma,x:\texttt{Circuit($\tau_1,\dots, \tau_n$)}\vdash C:\texttt{Unit}}
			{\Gamma\vdash \texttt{gate $x(x_1:\tau_1,\dots, x_n:\tau_n)$ \{ $U$ \} in \{ $C$ \}}:\texttt{Unit}} \\
		\inference{\Gamma\vdash E_1:\texttt{Qbit} \qquad \Gamma\vdash E_2:\texttt{Bit}}
			{\Gamma\vdash \texttt{measure $E_1$ -> $E_2$}:\texttt{Unit}} \quad
		\inference{\Gamma\vdash E:\texttt{Qbit}}{\Gamma\vdash \texttt{reset $E$}:\texttt{Unit}} \\
		\inference{\Gamma\vdash E:\texttt{Bit} \qquad \Gamma\vdash U:\texttt{Unit}}
			{\Gamma\vdash \texttt{if($E$==$I$) \{ $U$ \}}:\texttt{Unit}} \quad
		\inference{\Gamma\vdash C_1:\texttt{Unit} \qquad \Gamma\vdash C_2:\texttt{Unit}}{\Gamma\vdash \texttt{$C_1$; $C_2$}:\texttt{Unit}}
	\end{gather*}
	\caption{typedQASM typing rules}\label{fig:typing}
\end{figure}

The type system of typedQASM is mostly as expected, with the exception of static-length registers and register bounds checks in the typing rules for dereferences. To give the programmer flexibility to apply gates and circuits to just parts of a larger register -- for instance, when performing an $n$-bit addition into a length $2n$ register as in binary multiplication -- the type system also implicitly supports subtyping of static length registers. Specifically, any length $I$ array can be used in a context requiring \emph{at most} $I$ cells. While this adds a great deal of flexibility on the side of the programmer, as a downside typedQASM typing derivations are not unique. 

As an example of a well-typed QASM program, we show an implementation of the Toffoli circuit from \Cref{fig:toffoli} below:
\begin{lstlisting}
gate toffoli(x:Qbit, y:Qbit, z:Qbit) {
	h(z);
	t(x); t(y); t(z);
	cx(x,y); cx(x,z);
	tdg(y); tdg(z);
	cx(y,z); cx(z,x);
	t(x); tdg(z);
	cx(z,x); cx(x,y); cx(y,z);
	h(z)
}
\end{lstlisting}

\subsection{Type safety}

We now briefly sketch a proof of type safety for typedQASM. In particular, we show that typedQASM is strongly normalizing, as expected.

As is standard, we establish strong normalization by giving type preservation and progress lemmas. While type preservation is effectively implicit in the semantics of typedQASM due to the different syntactic classes, expressions may return different types of values and so we give a form of type preservation for such terms.
\begin{lemma}[Preservation (expressions)]
If $\Gamma\vdash E:\tau$, $\vdash \sigma:\Gamma$ and $\langle E, \sigma, \eta, \ket{\psi}\rangle \Downarrow v$, then either
\begin{itemize}
	\item $\tau=\beta$ and $v=l$ for some base type $\beta$ \& location $l$,
	\item $\tau=\beta[I]$ and $v=(l_0,\dots, l_{I'})$ where $I'\geq I$, or
	\item $\tau=\textnormal{\texttt{Circuit$(\tau_1,\dots,\tau_n)$}}$ and $v=\lambda x_1:\tau_1, \dots, x_n:\tau_n. U$.
\end{itemize}
\end{lemma}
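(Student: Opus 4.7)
The plan is to proceed by induction on the typing derivation $\Gamma \vdash E : \tau$. Since the register subtyping rule is not syntax-directed, induction on the derivation (rather than on the structure of $E$) is the natural choice; three cases arise, one per expression typing rule.

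In the variable case $\Gamma \vdash x : \tau$ with $x:\tau\in\Gamma$, the evaluation rule produces $v = \sigma(x)$, and I would invert $\vdash \sigma : \Gamma$ at the entry for $x$. The two non-trivial environment well-formedness rules supply exactly the second and third clauses of the lemma: a register binding yields a tuple of the declared length, while a circuit binding yields a lambda whose parameter list matches $\tau$. Note that the case $\tau = \beta$ cannot arise here, since $\vdash \sigma : \Gamma$ does not introduce bare base-type bindings; this is consistent with the lemma, whose first clause is only reachable through the dereference rule.

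For the dereference case $\Gamma \vdash x[I] : \beta$, derived from $\Gamma \vdash x : \beta[I']$ with $I \leq I' - 1$, I would invoke the induction hypothesis on $x$ to obtain a tuple $\sigma(x) = (l_0, \dots, l_J)$ whose length bound is strong enough to satisfy the semantic side-condition $I \leq J$; then $x[I]$ evaluates to the location $l_I$, which matches the first clause. The subtyping case $\Gamma \vdash E : \beta[I]$ from $\Gamma \vdash E : \beta[I']$ with $I \leq I'$ is essentially immediate: neither the term nor its evaluation changes, and the tuple-length bound inherited from the induction hypothesis for $I'$ subsumes the weaker bound required for $I$.

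The substantive content of the argument therefore lies entirely in threading the length bounds through the coupling of the dereference and subtyping rules, which the inductive structure keeps in sync. The only real obstacle, and it is a minor one, is careful bookkeeping around the off-by-one convention that relates the type $\beta[I]$ to the environment rule's tuple $(l_0, \dots, l_{I-1})$ and to the dereference/lemma form $(l_0, \dots, l_{I'})$; once a uniform convention is fixed, every case reduces to a direct comparison of integer bounds.
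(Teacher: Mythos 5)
Your proposal is correct and is essentially the paper's own argument: both proofs invert the typing judgement together with $\vdash \sigma : \Gamma$ and match each type form against the corresponding evaluation and environment rules. Your organization as an induction on the typing derivation merely makes explicit what the paper's terse case analysis on $\tau$ leaves implicit --- namely, collapsing iterated applications of the non-syntax-directed subtyping rule into the single bound $I' \geq I$ --- and you rightly flag the off-by-one bookkeeping between the environment rule's tuple $(l_0,\dots,l_{I-1})$ and the lemma's stated bound, a discrepancy the paper's own proof glosses over.
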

\begin{proof}
If $\tau=\beta$ then we must have $E=x[I]$, hence by the definition of $\Downarrow$, $v=l$. Likewise if $\tau=\beta[I]$ then we must have $E=x$ where $x:\beta[I']\in\Gamma$ for some $I'\geq I$, and since $\vdash \sigma:\Gamma$ then $v=\sigma(x)=(l_0,\dots, l_{I'})$. The case for $\tau=\texttt{Circuit$(\tau_1,\dots,\tau_n)$}$ is similar.
\end{proof}

The following lemmas give progress properties -- the fact that for a well-typed program, evaluation can always continue -- for the different syntactic classes of typedQASM. Together with type preservation, the result is that any well-typed typedQASM program evaluates to a value, i.e. that typedQASM is strongly normalizing.
\begin{lemma}[Progress (expressions)]
If $\Gamma\vdash E:\tau$ and $\vdash \sigma:\Gamma$, then for any $\eta, \ket{\psi}$, $\langle E, \sigma, \eta, \ket{\psi}\rangle\Downarrow v$.
\end{lemma}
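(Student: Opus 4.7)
The plan is to proceed by induction on the structure of the expression $E$, which by the grammar of \Cref{fig:syntax} is either a variable $x$ or an indexed access $x[I]$. The key auxiliary fact I will lean on is that the environment judgement $\vdash \sigma : \Gamma$ forces every identifier $x : \tau$ declared in $\Gamma$ to lie in the domain of $\sigma$ and to be mapped to a value of shape compatible with $\tau$: identifiers of array type $\beta[I]$ to tuples of length at least that demanded by $\tau$, and identifiers of circuit type to $\lambda$-abstractions with matching parameter annotations. This falls straight out of the two inductive clauses defining $\vdash \sigma : \Gamma$.

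For the variable case $E = x$, inverting the typing derivation --- threading through possible uses of the subtyping rule for arrays --- yields some $x : \tau' \in \Gamma$. The premise $\vdash \sigma : \Gamma$ then gives $x \in \textsf{dom}(\sigma)$, so the variable-lookup reduction rule is applicable and $\langle x, \sigma, \eta, \ket{\psi}\rangle\Downarrow \sigma(x)$.

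For the indexing case $E = x[I]$, the derivation of $\Gamma \vdash x[I] : \beta$ decomposes (modulo intervening subtyping) into a judgement $\Gamma \vdash x : \beta[I']$ with $I \leq I' - 1$. The induction hypothesis applied to the subterm $x$ gives $\langle x, \sigma, \eta, \ket{\psi}\rangle\Downarrow \sigma(x)$, and the preservation lemma for expressions (already proved above) ensures $\sigma(x)$ has the form $(l_0, \dots, l_J)$ with $J \geq I'$. Combining these inequalities yields $I \leq J$, which is exactly the side condition of the array-indexing rule, so $\langle x[I], \sigma, \eta, \ket{\psi}\rangle\Downarrow l_I$.

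The only real subtlety is handling the subtyping rule cleanly. Because subtyping may be applied at arbitrary points and derivations are therefore non-unique, I would first isolate a small inversion (canonical-forms) lemma of the form ``if $\Gamma \vdash x : \beta[I]$ is derivable then there is some $I'' \geq I$ with $x : \beta[I''] \in \Gamma$,'' obtained by a direct induction on the typing derivation of $x$ that collapses any chain of subtyping steps. This lemma both justifies that the relevant array-size information is recoverable from any well-typed derivation and cleanly bridges the variable and indexing cases of the main induction; beyond that, the argument is entirely routine.
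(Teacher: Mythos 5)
Your proof is correct and takes essentially the same route as the paper's: a case split on $E$, with the variable case discharged directly by the environment judgement $\vdash\sigma:\Gamma$, and the indexing case by inverting the typing derivation to get $\Gamma\vdash x:\beta[I']$ with $I\leq I'-1$ and then invoking preservation on $x$ to obtain a tuple long enough to satisfy the side condition of the indexing reduction rule. Your explicit canonical-forms lemma for collapsing chains of the array subtyping rule is a minor tightening of the paper's terser inversion step rather than a different approach.
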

\begin{proof}
By case analysis on $E$. If $E=x$ the proof is trivial, as $x:\tau\in\Gamma$ by inversion and $\vdash \sigma:\Gamma$ implies $x\in\textsf{dom}(x)$. If on the other hand $E=x[I]$, we must have $x:\beta[I']\in\Gamma$ for some $I'>I$. Then by preservation, $\langle x, \sigma, \eta, \ket{\psi}\rangle\Downarrow (l_0,\dots, l_{I''})$ for some $I''\geq I'-1$, hence $\langle x, \sigma, \eta, \ket{\psi}\rangle\Downarrow l_I$
\end{proof}

\begin{lemma}[Progress (unitary stmts)]
If $\Gamma\vdash U:\textnormal{\texttt{Unit}}$ and $\vdash \sigma:\Gamma$, then for any $\eta, \ket{\psi}$, $\langle U, \sigma, \eta, \ket{\psi}\rangle\Downarrow \ket{\psi'}$.
\end{lemma}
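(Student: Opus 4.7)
The plan is to proceed by structural induction on $U$, mirroring the progress proof for expressions. The atomic gate cases $\texttt{cx}(E_1,E_2)$, $\texttt{h}(E)$, $\texttt{t}(E)$, $\texttt{tdg}(E)$ are essentially immediate: inversion of the typing rule yields $\Gamma \vdash E_i : \texttt{Qbit}$ for each argument, the expression progress and preservation lemmas give $\langle E_i, \sigma, \eta, \ket{\psi}\rangle \Downarrow l_i$ with $l_i$ a genuine qubit location, and the corresponding semantic rule produces the required $\ket{\psi'}$. The sequencing case $U_1; U_2$ follows from two invocations of the induction hypothesis, using the fact that $\sigma$ and $\Gamma$ are unchanged across unitary reduction and that the preservation lemma guarantees the intermediate state $\ket{\psi'}$ arises from a consistent typing.

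The obstacle is the circuit application case $E(E_1,\dots, E_n)$. By inversion, $\Gamma \vdash E : \texttt{Circuit}(\tau_1, \dots, \tau_n)$ and $\Gamma \vdash E_i : \tau_i$. Expression progress and preservation force $\langle E, \sigma, \eta, \ket{\psi}\rangle \Downarrow \lambda x_1{:}\tau_1, \dots, x_n{:}\tau_n. U'$ for some $U'$, and the semantic rule then demands the evaluation of $U'\{E_1/x_1,\dots, E_n/x_n\}$. Since this substituted body is not a syntactic sub-term of $U$, naïve structural induction is insufficient.

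To close this gap I would strengthen the statement and induct on a well-founded measure. The key observation is the typing rule for \texttt{gate} declarations: the body $U'$ is typed in the context $\Gamma', x_1{:}\tau_1, \dots, x_n{:}\tau_n$ where $\Gamma'$ is the context \emph{before} the gate name $x$ is bound, so gate definitions cannot invoke themselves recursively. This stratifies the $\lambda$-values stored in $\sigma$ by declaration order, and the environment judgement $\vdash \sigma : \Gamma$ can be refined so that each circuit binding is closed over a strictly smaller context. One then proves progress by induction on the lexicographic pair (rank of the largest circuit type in $\Gamma$, size of $U$). A standard substitution lemma --- if $\Gamma, x_i{:}\tau_i \vdash U' : \texttt{Unit}$ and $\Gamma \vdash E_i : \tau_i$ then $\Gamma \vdash U'\{E_1/x_1,\dots,E_n/x_n\} : \texttt{Unit}$ --- re-types the substituted body in $\Gamma'$, which has strictly smaller rank than $\Gamma$, so the induction hypothesis applies and produces the desired $\ket{\psi'}$.

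The main obstacle is thus purely bookkeeping: formulating the stratified environment typing and checking that the substitution lemma respects the rank ordering. Once these auxiliary facts are in place, the rest of the argument is a mechanical case analysis, and combining this progress lemma with the analogous progress lemma for commands yields strong normalization of well-typed typedQASM programs.
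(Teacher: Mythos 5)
Your handling of the atomic, sequencing, and application cases up to the crux matches the paper's proof, and you have correctly identified the real difficulty: the substituted body $U'\{E_1/x_1,\dots,E_n/x_n\}$ is not a subterm of $E(E_1,\dots,E_n)$, so naive structural induction does not apply. (The paper's own proof in fact glosses over exactly this point --- it invokes the substitution lemma and then says ``we can structural induction.'') However, your proposed repair does not work as stated. The substitution lemma you quote concludes $\Gamma\vdash U'\{E_1/x_1,\dots,E_n/x_n\}:\texttt{Unit}$ in the \emph{full} context $\Gamma$, not in the prefix $\Gamma'$: the arguments $E_i$ are expressions whose free variables live anywhere in $\Gamma$, so the substituted body cannot in general be re-typed in $\Gamma'$. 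Consequently the first component of your lexicographic measure (the maximal circuit rank occurring in the context) does not decrease across an unfolding --- the context stays $\Gamma$ --- while the second component (term size) can grow, so the induction is not well-founded. Note also that even the prefix $\Gamma'$ need not have strictly smaller maximal rank than $\Gamma$, since the maximum may be attained by a binding declared before the gate being applied.

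The deeper problem is that declaration-order stratification alone cannot drive the induction, because typedQASM deliberately allows circuits to be passed as arguments to circuits. Consider \texttt{gate g(c:Circuit(Qbit), b:Qbit) \{ c(b) \}} followed later by \texttt{gate k(a:Qbit) \{ h(a) \}}; the well-typed application \texttt{g(k, q[0])} unfolds to \texttt{k(q[0])}, an application of a gate declared \emph{after} \texttt{g}, so ``position in $\Gamma$'' can increase. Dually, a later-declared first-order gate may apply an earlier-declared gate of higher circuit rank in its body, so the rank of the applied circuit can also increase. Neither ordering of the pair (rank, declaration position) is lexicographically decreasing. What actually makes the unfolding terminate is that whenever the position increases, the applied term is a substituted \emph{parameter}, whose type is a strict component of the applied circuit's type; exploiting this requires a type-rank-indexed argument in the style of hereditary substitution or a logical-relations (reducibility) predicate, essentially the standard normalization proof for simply-typed calculi, into which typedQASM's non-recursive gate declarations embed. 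Your instinct --- strengthen the statement and induct on a well-founded measure built from non-recursiveness of declarations --- would suffice if gate parameters were restricted to base and register types, but the measure you propose does not handle the higher-order case that the language (and the paper's own examples) explicitly support.
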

\begin{proof}
For the case $U=E(E_1,\dots,E_n)$, by the typing derivation we have $\Gamma\vdash E:\texttt{Circuit$(\tau_1,\dots,\tau_n)$}$ so by progress and preservation for expressions, $\langle E, \sigma, \eta, \ket{\psi}\rangle \Downarrow \lambda x_1:\tau_1, \dots, x_n:\tau_n. U$. By the substitution lemma below, $\Gamma\vdash U\{E_1/x_1,\dots, E_n/x_n\}:\textnormal{\texttt{Unit}}$ and hence we can structural induction to show that $\langle U, \sigma, \eta, \ket{\psi}\rangle\Downarrow \ket{\psi'}$.

\begin{lemma}[Substitution]
If $\Gamma, x_1:\tau_1,\dots, x_n:\tau_n\vdash U:\textnormal{\texttt{Unit}}$, and $\Gamma\vdash E_i:\tau_i$ for each $1\leq i\leq n$ then $\Gamma\vdash U\{E_1/x_1,\dots, E_n/x_n\}:\textnormal{\texttt{Unit}}$
\end{lemma}
\end{proof}

\begin{lemma}[Progress (commands)]\label{lem:pcommands}
If $\Gamma\vdash C:\textnormal{\texttt{Unit}}$ and $\vdash \sigma:\Gamma$, then for any $\eta, \ket{\psi}$, $\langle C, \sigma, \eta, \ket{\psi}\rangle\Downarrow \langle \sigma', \eta', \ket{\psi'}\rangle$.
\end{lemma}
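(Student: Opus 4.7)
The plan is to proceed by structural induction on the derivation of $\Gamma\vdash C:\texttt{Unit}$, performing case analysis on the final typing rule. The cases group naturally into three kinds: primitive effectful commands, which reduce using the already-proven progress and preservation lemmas for expressions and unitary statements; block-scoped declarations, which extend the environment with a fresh binding before recursing into the body; and sequencing, which chains two command reductions.

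The primitive cases are essentially bookkeeping. For \texttt{measure $E_1$ -> $E_2$}, inversion gives $\Gamma\vdash E_1:\texttt{Qbit}$ and $\Gamma\vdash E_2:\texttt{Bit}$, so expression progress and preservation reduce each $E_i$ to a location $l_i$, after which either of the two non-deterministic measurement rules applies. The case for \texttt{reset $E$} is analogous, and the embedding of a unitary statement is discharged immediately by the unitary-progress lemma. For \texttt{if($E$==$I$) \{ $U$ \}}, expression progress and preservation yield a bit location $l$; if $\eta(l)\neq I$ the skip rule fires, and otherwise the unitary-progress lemma reduces $U$. For each block-scoped declaration \texttt{creg $x[I]$ in \{ $C$ \}}, \texttt{qreg $x[I]$ in \{ $C$ \}}, or \texttt{gate $x(x_1:\tau_1,\dots,x_n:\tau_n)$ \{ $U$ \} in \{ $C$ \}}, I would pick fresh heap or qubit indices (respectively form the closure $\lambda x_1:\tau_1,\dots,x_n:\tau_n.U$) to obtain a value $v$ matching the declared type $\tau$; the second and third environment-typing rules of \Cref{fig:typing} then give $\vdash\sigma[x\leftarrow v]:\Gamma,x:\tau$, so the inductive hypothesis applied to the body $C$ under the extended environment yields the required reduction.

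The main obstacle is the sequencing case \texttt{$C_1$; $C_2$}: the IH applied to $C_1$ produces $\langle\sigma',\eta',\ket{\psi'}\rangle$, but to apply the IH to $C_2$ we also need $\vdash\sigma':\Gamma$, i.e.\ an environment-preservation fact stating that command reduction respects the typing of the ambient context. Because declarations in typedQASM are block-scoped rather than global, every extension to $\sigma$ introduced by a \texttt{creg}, \texttt{qreg}, or \texttt{gate} is local to its block, while the primitive cases leave $\sigma$ unchanged. This yields a short companion induction, proved in parallel with the main lemma, establishing that $\vdash\sigma:\Gamma$ implies $\vdash\sigma':\Gamma$ for any reduction $\langle C,\sigma,\eta,\ket{\psi}\rangle\Downarrow\langle\sigma',\eta',\ket{\psi'}\rangle$ with $\Gamma\vdash C:\texttt{Unit}$. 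With that auxiliary preservation in hand, the sequencing case closes by a second application of the inductive hypothesis, and the remaining reasoning is routine.
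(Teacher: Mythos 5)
Your proof is correct and takes essentially the same route as the paper's: structural induction on $C$, discharging the primitive cases via the expression and unitary-statement progress/preservation lemmas and the declaration cases by extending the environment typing and invoking the induction hypothesis --- the paper details only the \texttt{gate} case and declares the remaining cases similar. Your companion fact for sequencing (that $\vdash\sigma':\Gamma$ survives the reduction of $C_1$) makes explicit a step the paper silently elides; it holds trivially here because typedQASM's block-scoped semantics leaves the ambient environment unchanged --- declarations restore $\sigma$ on exit and no other command modifies it --- so your parallel induction simply closes the one case the paper's sketch leaves implicit.
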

\begin{proof}
Proof by induction on the structure of $C$. We show one case: $$C=\texttt{gate $x(x_1:\tau_1,\dots, x_n:\tau_n)$ \{ $U$ \} in \{ $C$ \}}$$

We know that
\[
\resizebox{\linewidth}{!}{$
	\langle \texttt{gate $x(x_1:\tau_1,\dots, x_n:\tau_n)$ \{ $U$ \}}, \sigma, \eta, \ket{\psi}\rangle \Downarrow
	\langle \sigma[x\leftarrow \lambda x_1,\dots,x_n.U],\eta,\ket{\psi}\rangle.
$}
\]
By the typing derivation, $\Gamma,x_1:\tau_1,\dots, x_n:\tau_n\vdash U:\texttt{Unit}$ and $\Gamma,x:\texttt{Circuit$(\tau_1,\dots, \tau_n)$}\vdash C:\texttt{Unit}$. It then follows that \[\vdash \sigma[x\leftarrow \lambda x_1:\tau_1,\dots, x_n:\tau_n. U]:\Gamma,x:\texttt{Circuit$(\tau_1,\dots, \tau_n)$},\] and hence we can apply the inductive hypothesis to complete the case.

The remaining cases are similar.
\end{proof}

\begin{theorem}[Strong normalization]
If $\vdash C:\textnormal{\texttt{Unit}}$, then $$\langle C, \emptyset, \lambda l.0, \ket{00\cdots}\rangle\Downarrow \langle \sigma, \eta, \ket{\psi}\rangle.$$
\end{theorem}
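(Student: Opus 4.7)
The plan is to derive the theorem directly from \Cref{lem:pcommands}. First I would observe that the empty environment is well-typed in the empty context, since $\vdash \cdot : \cdot$ is the axiomatic base case of the environment typing rules given in \Cref{fig:typing}. Then the hypotheses of \Cref{lem:pcommands} are satisfied by instantiating $\Gamma = \cdot$, $\sigma = \emptyset$, $\eta = \lambda l.0$, and $\ket{\psi} = \ket{00\cdots}$, and its conclusion delivers exactly the reduction $\langle C, \emptyset, \lambda l.0, \ket{00\cdots}\rangle \Downarrow \langle \sigma, \eta, \ket{\psi}\rangle$ demanded by the theorem.

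In this sense the theorem is essentially a one-line corollary: the substantive technical content has already been discharged by the preservation lemma for expressions together with the three progress lemmas (for expressions, unitary statements, and commands) and the substitution lemma invoked in the unitary-progress case. No further induction or case analysis is required at the level of the theorem itself, and in particular there is no separate strong-normalization argument to construct beyond recognizing that the initial configuration meets the progress lemma's premises.

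The one point I would flag explicitly is the tacit assumption, inherited from the operational semantics in \Cref{fig:semantics}, that fresh classical and quantum locations are always available when reducing \texttt{creg}, \texttt{qreg}, and \texttt{measure} commands. The normalization claim should therefore be read as conditional on the idealization that the abstract machine never exhausts its classical heap or qubit supply. Under this convention the proof reduces to the single instantiation of \Cref{lem:pcommands} described above, and the only conceivable obstacle, namely verifying that $\vdash \emptyset : \cdot$, is immediate from the base case of the environment judgement.
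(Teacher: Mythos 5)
Your proposal is correct and follows exactly the paper's own route: the paper proves this theorem as a direct consequence of \Cref{lem:pcommands}, instantiated at the empty context and initial configuration, just as you do. Your added remarks---that $\vdash \emptyset : \cdot$ holds by the base case of the environment judgement, and that normalization is conditional on the idealization that fresh locations never run out---are both consistent with assumptions the paper makes explicitly.
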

\begin{proof}
Direct consequence of \Cref{lem:pcommands}.
\end{proof}

\section{MetaQASM}\label{sec:mqasm}

Now that we have a safe, array-bounds-checked, typed language, we can add metaprogramming features. In particular, we wish to support\footnote{Controlled circuits are another desirable metaprogramming feature found in many quantum circuit description languages. While metaQASM gates are in fact closed over qubit controls, they require \emph{ancillae} to construct \cite{kmm12}. This complicates the inclusion of a control instruction in metaQASM, and further abstracts away from concrete, resource-driven nature of QASM.}
\begin{itemize}
	\item circuit inversion/reversal, and
	\item circuits parametrized by sizes.
\end{itemize}
While the latter could be accomplished in an ad-hoc way, allowing \emph{type-level} integers allows for more safety in that array bounds can be statically checked, and increases the readability of programs. Moreover, it enforces a clear separation between circuits and families of circuits, which naturally support different operations -- for instance, a family of circuits can't easily be visualized diagrammatically, while a particular instance can \cite{prz17}.

\begin{figure}[t]
	\resizebox{\textwidth}{!}{
	\begin{tabular}{rrl}
		Types $\tau$ & $::=$ & $\dots$ $\mid$ \texttt{Family$(y_1,\dots, y_m)(\tau_1,\dots, \tau_n)$} \\
		Index $I$ & $::=$ & $\dots$ $\mid$ $y$ $\mid$ $\infty$ $\mid$ $I_1 + I_2$ $\mid$ $I_1 - I_2$ $\mid$ $I_1\cdot I_2$ \\
		Range $\iota$ & $::=$ & [$I_1$,$I_2$] \\
		Expression $E$ & $::=$ & $\dots$ $\mid$ \texttt{instance$(I_1, \dots, I_m)$ $E$} \\
		Unitary Stmt $U$ & $::=$ & $\dots$ $\mid$ \texttt{reverse $U$} $\mid$ \texttt{for $y=I_1..I_2$ do \{ $U$ \}} \\
		Command $C$ & $::=$ & $\dots$ $\mid$ 
			\texttt{family$(y_1,\dots, y_m)$ $x(x_1:\tau_1,\dots, x_n:\tau_n)$ \{ $U$ \} in \{ $C$ \}} \\
		Value $V$ & $::=$ & $\dots$ $\mid$ $\Pi y_1,\dots, y_m.V$
	\end{tabular}
	}
	\caption{metaQASM syntax}\label{fig:msyntax}
\end{figure}

Figure~\ref{fig:msyntax} gives the new syntax for metaQASM. Indices $I$ are extended with index variables $y$ and integer arithmetic, and a new syntactic form defining a family of quantum circuits parametrized over index variables is given. The index $\infty$ only exists in the process of type checking and is not valid syntax in source code. Intuitively, the declaration \[\texttt{family$(y_1,\dots, y_m)$ $x(x_1:\tau_1,\dots, x_n:\tau_n)$ \{ $U$ \} in \{ $C$ \}}\] introduces index variables $y_1, \dots, y_m$ into the evaluation and type checking contexts for $\tau_i$ and $U$. 

Figure~\ref{fig:msemantics} gives the semantics of the new syntax. Since index variables cannot be modified or captured, we use a substitution style of evaluation for circuit families. The \texttt{reverse} command introduces a new reduction relation $\langle \texttt{$U$}, \sigma, \ket{\psi}\rangle\Uparrow v$ for which reduction of $U$ is inverted. We give a concrete semantics rather than an abstract rule such as 
\[
	\inference{\langle \texttt{$U$}, \sigma, \eta, \ket{\psi'}\rangle\Downarrow \ket{\psi}}
			{\langle \texttt{reverse $U$}, \sigma, \eta, \ket{\psi}\rangle\Downarrow \ket{\psi'}}
\]
so that metaQASM has a concrete execution model. Inversion of circuits is straightforward in metaQASM, as in any closed context a unitary statement can be statically unrolled to a finite sequence of gates.

\begin{figure}
Indices:
\vspace{-1em}
	\begin{gather*}
		\inference{}{\langle i, \sigma, \eta, \ket{\psi}\rangle\Downarrow i} \qquad
		\inference{\langle I_1, \sigma, \eta, \ket{\psi}\rangle\Downarrow i_1 \quad 
			\langle I_2, \sigma, \eta, \ket{\psi}\rangle\Downarrow i_2 \quad 
			\star\in\{+, -, \cdot\}}
			{\langle I_1\star I_2, \sigma, \eta, \ket{\psi}\rangle\Downarrow i_1 \star i_2}
	\end{gather*}
Expressions:
	\begin{gather*}
		\inference{\langle E, \sigma, \eta, \ket{\psi}\Downarrow \Pi y_1,\dots, y_m.\lambda x_1:\tau_1,\dots, x_n:\tau_n. U}
			{\langle \texttt{instance$(I_1, \dots, I_m)$ $E$}, \sigma, \eta, \ket{\psi} \Downarrow (\lambda x_1:\tau_1,\dots, x_n:\tau_n. U)\{I_1/y_1, \dots, I_m/y_m\}}
	\end{gather*}
Unitary statements:
	\begin{gather*}
		\inference{\langle \texttt{$U$}, \sigma, \eta, \ket{\psi}\rangle\Uparrow \ket{\psi'}}
			{\langle \texttt{reverse $U$}, \sigma, \eta, \ket{\psi}\rangle\Downarrow \ket{\psi'}} \quad
		\inference{\langle I_1, \sigma, \eta, \ket{\psi}\rangle\Downarrow i_1 \quad 
			\langle I_2, \sigma, \eta, \ket{\psi}\rangle\Downarrow i_2 \quad i_1>i_2}
			{\langle \texttt{for $y=I_1..I_2$ do \{ $U$ \}}, \sigma, \eta, \ket{\psi}\rangle \Downarrow \ket{\psi}} \\
		\inference{\langle I_1, \sigma, \eta, \ket{\psi}\rangle\Downarrow i_1 \quad 
			\langle I_2, \sigma, \eta, \ket{\psi}\rangle\Downarrow i_2 \quad i_1\leq i_2 \\
			\langle U\{i_1/y\}, \sigma, \eta, \ket{\psi}\rangle \Downarrow \ket{\psi'} \\
			\langle \texttt{for $y=i_1+1..i_2$ do \{ $U$ \}}, \sigma,\eta, \ket{\psi'}\rangle \Downarrow \ket{\psi''}}
			{\langle \texttt{for $y=I_1..I_2$ do \{ $U$ \}}, \sigma,\eta, \ket{\psi}\rangle \Downarrow \ket{\psi''}}
	\end{gather*}
Reverse reduction:
	\begin{gather*}
		\inference{\langle E, \sigma, \eta, \ket{\psi}\rangle\Downarrow l}
			{\langle \texttt{h$(E)$}, \sigma,\eta, \ket{\psi}\rangle \Uparrow H_{l}\ket{\psi}} \;
		\inference{\langle E, \sigma, \eta, \ket{\psi}\rangle\Downarrow l}
			{\langle \texttt{t$(E)$}, \sigma,\eta, \ket{\psi}\rangle \Uparrow T_{l}^\dagger\ket{\psi}} \;	
		\inference{\langle E, \sigma, \eta, \ket{\psi}\rangle\Downarrow l}
			{\langle \texttt{tdg$(E)$}, \sigma,\eta, \ket{\psi}\rangle \Uparrow T_{l}\ket{\psi}} \\
		\inference{\langle E_1, \sigma, \eta, \ket{\psi}\rangle\Downarrow l_1 \quad \langle E_2, \sigma, \eta, \ket{\psi}\rangle\Downarrow l_2}
			{\langle \texttt{cx$(E_1, E_2)$}, \sigma,\eta, \ket{\psi}\rangle\Uparrow \cnot_{l_1, l_2}\ket{\psi}} \;
		\inference{\langle E, \sigma, \eta, \ket{\psi}\rangle\Downarrow \lambda x_1,\dots,x_n.U, \\ 
			\langle U\{E_1/x_1, \dots, E_n/x_n\}, \sigma, \eta, \ket{\psi}\rangle \Uparrow \ket{\psi'}}
			{\langle E(E_1,\dots, E_n), \sigma,\eta, \ket{\psi}\rangle \Uparrow \ket{\psi'}} \\
		\inference{\langle U_2, \sigma, \eta, \ket{\psi}\rangle \Uparrow \ket{\psi'} \qquad 
			\langle U_1, \sigma, \eta, \ket{\psi'}\rangle \Uparrow \ket{\psi''}}
			{\langle U_1 ;\; U_2, \sigma, \eta, \ket{\psi}\rangle \Uparrow\ket{\psi''}} \\
		\inference{\langle \texttt{$U$}, \sigma, \eta, \ket{\psi}\rangle\Downarrow \ket{\psi'}}
			{\langle \texttt{reverse $U$}, \sigma, \eta, \ket{\psi}\rangle\Uparrow \ket{\psi'}} \quad
		\inference{\langle I_1, \sigma, \eta, \ket{\psi}\rangle\Downarrow i_1 \quad 
			\langle I_2, \sigma, \eta, \ket{\psi}\rangle\Downarrow i_2 \quad i_2<i_1}
			{\langle \texttt{for $y=I_1..I_2$ do \{ $U$ \}}, \sigma, \eta, \ket{\psi}\rangle \Uparrow \ket{\psi}} \\
		\inference{\langle I_1, \sigma, \eta, \ket{\psi}\rangle\Downarrow i_1 \quad 
			\langle I_2, \sigma, \eta, \ket{\psi}\rangle\Downarrow i_2 \quad i_2\geq i_1 \\
			\langle U\{i_2/y\}, \sigma, \eta, \ket{\psi}\rangle \Uparrow \ket{\psi'} \\
			\langle \texttt{for $y=i_1..i_2-1$ do \{ $U$ \}}, \sigma,\eta, \ket{\psi'}\rangle \Uparrow \ket{\psi''}}
			{\langle \texttt{for $y=I_1..I_2$ do \{ $U$ \}}, \sigma,\eta, \ket{\psi}\rangle \Uparrow \ket{\psi''}}
	\end{gather*}
Commands:
	\begin{gather*}
		\inference{\langle C, \sigma[x\leftarrow \Pi y_1,\dots, y_m.\lambda x_1:\tau_1,\dots,x_n:\tau_n.U],\eta,\ket{\psi}\rangle \Downarrow
			\langle \sigma', \eta', \ket{\psi'}\rangle}
			{\langle\texttt{family$(y_1,\dots, y_m)$ $x(x_1:\tau_1,\dots, x_n:\tau_n)$ \{ $U$ \} in \{ $C$ \}}, \sigma,\eta, \ket{\psi}\rangle 
				\Downarrow \langle \sigma, \eta', \ket{\psi'}\rangle}
	\end{gather*}
	\caption{metaQASM semantics}\label{fig:msemantics}
\end{figure}

\begin{figure}
\begin{lstlisting}
include "toffoli.qasm";
gate maj(a:Qbit, b:Qbit, c:Qbit, res:Qbit) {
	toffoli(b, c, res);
	cx(b, c);
	toffoli(a, c, res);
	cx(b, c)
}	
family(n) add(a:Qbit[n], b:Qbit[n], c:Qbit[n], anc:Qbit[n]) {
	cx(a[0], c[0]);
	cx(b[0], c[0]);
	toffoli(a[0], b[0], anc[0]);
	for i=1..n-1 do {
		cx(a[i], c[i]);
		cx(b[i], c[i]);
		cx(anc[i-1], c[i]);
		maj(a[i], b[i], anc[i-1], anc[i])
	}	
}
\end{lstlisting}
\caption{metaQASM implementation of a carry-ripple adder.}\label{fig:adder}
\end{figure}

As an illustration of metaprogramming in metaQASM, \Cref{fig:adder} gives metaQASM code for a simple (non-garbage-cleaning) adder.
Our syntax (and type system) also allows an instance of a family of circuits to accept other circuit families as arguments, a useful feature which allows circuit families to be parametric in the implementation of a sub-routine as shown below (using a minor syntax extension to allow array slicing).

\begin{lstlisting}
family(n) mult(x:Qbit[n], y:Qbit[n], z:Qbit[2*n], 
	anc:Qbit, ctrlAdd:Family(m)
		(x:Qbit, y:Qbit[m], z:Qbit[m], c:Qbit)) 
{
  for i=0..n-1 do {
    instance(n) ctrlAdd(x[i], y, z[i..i+n-1], anc)
  }
}
\end{lstlisting}

By extending our syntax with parametrized gates as in regular openQASM, we can also define a parametrized family of circuits computing the \emph{quantum Fourier transform} as in \cite{prz17}.

\begin{minipage}{\linewidth}
\begin{lstlisting}
include "cphase.qasm";
family(n) qft(x:Qbit[n]) {
	for i=0..n-1 do {
		h(x[i]);
		for j=i+1..n-1 do {
			cphase(j-1+1)(x[i], x[j])
		}
	}
}
\end{lstlisting}
\end{minipage}

\subsection{Type system}

The type system of metaQASM is inspired by Dependent ML \cite{x01}. Figure~\ref{fig:mtype} gives the rules of our system. Type rules are defined over two contexts $\Delta;\Gamma$, where $\Delta$ contains interval constraints on index variables.

\begin{figure}
Indices:
	\begin{gather*}
		\inference{}{\Delta\vdash i:[i,i]} \quad 
		\inference{y:[I_1, I_2]\in\Delta}{\Delta\vdash y:[I_1, I_2]} \quad
		\inference{\Delta\vdash I:[I_1, I_2] \quad \Delta\models I_1'\leq I_1 \quad  \Delta\models I_2'\geq I_2}
			{\Delta\vdash I:[I_1', I_2']} \\
		\inference{\Delta\vdash I:[I_1,I_2] \quad \Delta\vdash I':[I'_1,I'_2]}
			{\Delta\vdash I + I':[I_1+I_1',I_2+I_2']} \quad
		\inference{\Delta\vdash I:[I_1,I_2] \quad \Delta\vdash I':[I_1',I_2']}
			{\Delta\vdash I - I':[I_1-I_1',I_2-I_2']} \quad \\
		\inference{\Delta\vdash I:[I_1,I_2] \quad \Delta\vdash I':[I_1',I_2'] \\
			\Delta\models I_1''=\min(I_1\cdot I_1', I_1\cdot I_2', I_2\cdot I_1', I_2\cdot I_2') \\ 
			\Delta\models I_2''=\max(I_1\cdot I_1', I_1\cdot I_2', I_2\cdot I_1', I_2\cdot I_2')}
			{\Delta\vdash I \cdot I':[I_1'', I_2'']} \;
	\end{gather*}
Expressions:
	\begin{gather*}
		\inference{\Delta;\Gamma \vdash x:\beta[I'] \qquad \Delta\models 0\leq I< I'}
			{\Delta;\Gamma\vdash x[I] : \beta} \\
		\inference{\Delta;\Gamma \vdash E:\texttt{Family$(y_1,\dots, y_m)(\tau_1,\dots, \tau_n)$} \\
			\Delta\vdash I_1:[0,\infty] \quad \cdots \quad \Delta\vdash I_m:[0,\infty]}
			{\resizebox{\textwidth}{!}{$\Delta;\Gamma\vdash \texttt{instance$(I_1, \dots, I_m)$ $E$}:\texttt{Circuit($\tau_1\{I_1/y_1,\dots, I_m/y_m\},\dots, \tau_n\{I_1/y_1,\dots, I_m/y_m\}$)}$}}
	\end{gather*}
Unitary statements:
	\begin{gather*}
		\inference{\Delta;\Gamma\vdash U:\texttt{Unit}}{\Delta;\Gamma\vdash \texttt{reverse $U$}:\texttt{Unit}} \quad
		\inference{\Delta\vdash I:[I_1, I_2] \qquad \Delta\vdash I':[I_1',I_2'] \\
			\Delta, y:[I,I']; \Gamma\vdash U:\texttt{Unit}}
			{\Delta;\Gamma\vdash \texttt{for $y=I..I'$ do \{ $U$ \}}:\texttt{Unit}}
	\end{gather*}
Commands:
	\begin{gather*}
		\inference{\Delta, y_1:[0,\infty],\dots, y_m:[0,\infty]\vdash \tau_1 :: *\quad \cdots \quad 
			\Delta, y_1:[0,\infty],\dots, y_m:[0,\infty]\vdash \tau_n :: * \\
			\Delta, y_1:[0,\infty],\dots, y_m:[0,\infty]; \Gamma, x_1:\tau_1,\dots,x_n:\tau_n\vdash U:\texttt{Unit}, \\
			\Delta;\Gamma,x:\texttt{Family$(y_1,\dots, y_m)(\tau_1,\dots, \tau_n)$}\vdash C:\texttt{Unit}}
			{\Delta;\Gamma\vdash \texttt{family$(y_1,\dots, y_m)$ $x(x_1:\tau_1,\dots, x_n:\tau_n)$ \{ $U$ \} in \{ $C$ \}}:\texttt{Unit}} \\
	\end{gather*}
	\caption{metaQASM typing rules}\label{fig:mtype}
\end{figure}

As with typedQASM, array bounds are checked and subtyping on array lengths is allowed. Integer expressions are assigned intervals which may be arbitrary (well-formed) integer expressions. The judgement $\Delta\models P$ which appears in the typing rules for integer expressions denotes that under the context $\Delta$, the (in)equality $P$ holds. We leave a particular constraint solver up to implementation. It remains an open question whether undecidable constraints can be generated by our type system, though in practice it appears most common constraints can be efficiently solved with off-the-shelf constraint solvers \cite{x01}.

The type system of \Cref{fig:mtype} also involves \emph{kind} judgements of the form
\[
	\Delta\vdash \tau :: *
\]
stating that $\tau$ is a simple type in the index context $\Delta$. While the rules of our kind system are not given here, it is straightforward to derive. In particular, $\tau$ has kind $*$ if $\tau$ does not reference any free index variables, and does not contain any registers of negative length.

\begin{remark}
The fact that metaQASM has no means of specifying and checking relational properties on indices causes some programs to require counter-intuitive type schemes. For instance, the following $n$-bit adder is not well-typed due to the statement \texttt{toffoli(x[n-2], ctrl, y[n-1])}, though it does not cause run-time errors when $n\geq 2$.
\begin{lstlisting}
include "toffoli.qasm";
family(n) ctrlAdd(ctrl:Qbit, x:Qbit[n], 
                  y:Qbit[n], c:Qbit) {
	toffoli(x[0], ctrl, y[0]);
	cx(x[0], c);
	toffoli(c, y[0], x[0]);
	for i=1..n-2 do {
		toffoli(x[i], ctrl, y[i]);
		cx(x[i-1], x[i]);
		toffoli(x[i-1], y[i], x[i])
	}
	toffoli(x[n-1], ctrl, y[n-1]);
	toffoli(x[n-2], ctrl, y[n-1]);
	for i=2..n-1 do {
		toffoli(x[n-i-1], y[n-i], x[n-i]);
		cx(x[n-i-1], x[n-i]);
		toffoli(x[n-i-1], ctrl, y[n-i])
	}
	toffoli(c, y[0], x[0]);
	cx(x[0], c);
	toffoli(c, ctrl, y[0])
}
\end{lstlisting}
The above adder can modified \cite{f18} to a well-typed program by using $m=n-2$ as the parameter, effectively specifying the number of entries \emph{greater than $2$} that the input registers contain. The program snippet below gives the declaration required to make the controlled Adder implementation (with appropriate re-indexing) well-typed.
\begin{lstlisting}
family(m) ctrlAdd(ctrl:Qbit, x:Qbit[m+2], 
                  y:Qbit[m+2], c:Qbit)
\end{lstlisting}
\end{remark}

In most practical cases appropriate parameters can be given so as to allow a well-typed implementation of a circuit family. However, the family parameters can be counter-intuitive, and more egregiously it can be unclear as to how to generate an intended instance. We leave it as an avenue for future work to add specification and checking of bounds and relational properties to metaQASM.

\subsection{Type safety}

As in the case of typedQASM, metaQASM is strongly normalizing, due to the lack of recursion and unbounded loops. Progress relies on the fact that during the course of evaluation, no free index variables are encountered -- hence any term encountered by an interpreter is well-typed in the empty index context, and in particular indices can be evaluated to finite integers, as shown below.
\begin{lemma}\label{lem:pn}
If $\cdot \vdash I:[I_1,I_2]$, then $\langle I, \sigma, \eta, \ket{\psi}\rangle\Downarrow i$.
\end{lemma}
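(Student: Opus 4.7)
The plan is to proceed by structural induction on the typing derivation $\cdot \vdash I : [I_1, I_2]$, or equivalently by induction on the shape of $I$, appealing to the typing rules for indices given in \Cref{fig:mtype} together with the evaluation rules for indices in \Cref{fig:msemantics}. The key observation that makes the statement go through is that the empty index context rules out exactly the two syntactic forms that have no evaluation rule, namely free index variables $y$ and the pseudo-index $\infty$.

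First I would handle the base cases. For $I = i$ an integer literal, the rule $\langle i, \sigma, \eta, \ket{\psi}\rangle\Downarrow i$ fires immediately, so we are done. For $I = y$, inversion of the typing judgement would require $y:[I_1,I_2] \in \cdot$, which is impossible, so this case is vacuous. For $I = \infty$, no typing rule in \Cref{fig:mtype} concludes with $\Delta \vdash \infty : [I_1,I_2]$ (the only rule that could apply is subtyping, which would require a prior derivation of a finite interval for $\infty$, and by induction on derivations this can never be bottomed out), so this case is also vacuous; this matches the informal remark in the text that $\infty$ appears only during type checking and is not valid source syntax.

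Next I would handle the arithmetic cases $I = I' \star I''$ for $\star \in \{+,-,\cdot\}$. Inversion of the typing derivation yields $\cdot \vdash I' : [J_1, J_2]$ and $\cdot \vdash I'' : [J_1', J_2']$. The inductive hypothesis then gives integers $i'$ and $i''$ with $\langle I', \sigma,\eta,\ket{\psi}\rangle \Downarrow i'$ and $\langle I'', \sigma,\eta,\ket{\psi}\rangle \Downarrow i''$, so the arithmetic evaluation rule delivers $\langle I' \star I'', \sigma,\eta,\ket{\psi}\rangle \Downarrow i' \star i''$. Finally, the subtyping rule on indices is handled by a direct appeal to the inductive hypothesis on the premise, since it does not alter the index $I$ itself.

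The only subtle point, and the main thing to nail down carefully, is the treatment of $\infty$: strictly speaking one has to verify that no typing derivation in the empty context can ever assign $\infty$ a finite interval, which is a small secondary induction on derivations (the subtyping rule preserves the underlying index, and no axiom introduces $\infty$). With that observation in place the remaining cases are routine and the lemma follows.
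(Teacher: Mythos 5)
Your proof is correct and follows essentially the same route as the paper, which simply observes that the judgement $\cdot \vdash I:[I_1,I_2]$ forces $I$ to contain no index variables and no occurrence of $\infty$ (since no rule derives $\Delta\vdash \infty:[I_1,I_2]$), so $I$ is a closed arithmetic expression that evaluates. You have merely made explicit the induction on the typing derivation (including the subsumption case and the secondary induction for $\infty$) that the paper leaves implicit in its one-line argument.
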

\begin{proof}
Trivial since the judgement $\cdot \vdash I:[I_1,I_2]$ requires that $I$ does not contain any variables. Note also that there is no derivation of a judgement of the form $\Delta\vdash \infty:[I_1,I_2]$ hence $I$ cannot contain any infinite integers.
\end{proof}

The remaining lemmas are extensions of results for typedQASM. Only the new or different cases are considered.
\begin{lemma}[Preservation (expressions)]
If $\cdot;\Gamma\vdash E:\tau$, $\vdash \sigma:\Gamma$ and $\langle E, \sigma, \eta, \ket{\psi}\rangle \Downarrow v$, then either
\begin{enumerate}
	\item $\tau=\beta$ and $v=l$,
	\item $\tau=\beta[I]$ and $v=(l_0,\dots, l_{I'})$ where $I'\geq I$, or
	\item $\tau=\textnormal{\texttt{Circuit$(\tau_1,\dots,\tau_n)$}}$ and $v=\lambda x_1:\tau_1, \dots, x_n:\tau_n. U$
	\item $\tau=\textnormal{\texttt{Family$(y_1,\dots, y_m)(\tau_1,\dots, \tau_n)$}}$ and 
	\[v=\Pi y_1,\dots, y_m.\lambda x_1:\tau_1, \dots, x_n:\tau_n. U.\]
\end{enumerate}
\end{lemma}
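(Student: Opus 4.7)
The plan is to follow the strategy of the typedQASM preservation proof, proceeding by induction on the typing derivation $\cdot;\Gamma\vdash E:\tau$. There are only four expression-level typing rules in metaQASM: the three inherited rules for variables, dereferences, and array subtyping (now phrased via the interval judgement), together with the new rule for \texttt{instance}. The first three cases are essentially unchanged from the typedQASM proof, once the environment well-formedness judgement $\vdash\sigma:\Gamma$ is extended with a clause for \texttt{Family} values analogous to the existing clause for \texttt{Circuit}: namely, $\vdash\sigma[x\leftarrow \Pi y_1,\ldots, y_m.\lambda x_1:\tau_1,\ldots, x_n:\tau_n. U]:\Gamma, x:\texttt{Family}(y_1,\ldots,y_m)(\tau_1,\ldots,\tau_n)$ whenever $y_1:[0,\infty],\ldots, y_m:[0,\infty];\Gamma, x_1:\tau_1,\ldots, x_n:\tau_n\vdash U:\texttt{Unit}$.

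For the new case $E = \texttt{instance}(I_1,\ldots,I_m)\,E'$, inversion on the typing rule gives $\cdot;\Gamma\vdash E':\texttt{Family}(y_1,\ldots,y_m)(\tau_1,\ldots,\tau_n)$ and $\cdot\vdash I_i:[0,\infty]$ for each $i$. In the empty index context, the index rule for $y$ is unavailable (no binding in $\Delta$) and $\infty$ has no syntactic introduction form, so each $I_i$ must be a closed expression built from integer constants and arithmetic. The induction hypothesis applied to $E'$ places us in clause~4 of the conclusion, so $E'$ reduces to some $v' = \Pi y_1,\ldots, y_m.\lambda x_1:\tau_1,\ldots, x_n:\tau_n.U$. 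The semantic rule for \texttt{instance} then returns
\[
v = \lambda x_1:\tau_1\{I_1/y_1,\ldots, I_m/y_m\},\ldots, x_n:\tau_n\{I_1/y_1,\ldots, I_m/y_m\}.\, U\{I_1/y_1,\ldots, I_m/y_m\},
\]
which is of the shape required by clause~3 for exactly the type $\texttt{Circuit}(\tau_1\{I_1/y_1,\ldots, I_m/y_m\},\ldots, \tau_n\{I_1/y_1,\ldots, I_m/y_m\})$ reported by the \texttt{instance} typing rule.

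The main obstacle is really just setting up the \texttt{Family} clause of the extended environment judgement; once that is in place, the remaining work is a routine syntactic match between the output of the \texttt{instance} semantic rule and the conclusion of its typing rule. Notably, no type-level substitution lemma for indices is required at this point — the semantic rule performs the substitution syntactically, and the typing rule literally names the substituted type. Such a lemma will instead be needed downstream, inside the progress argument for unitary statements, in order to retype the body $U$ after the returned lambda is applied to actual arguments.
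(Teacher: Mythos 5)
Your proposal is correct and follows essentially the same route as the paper's own proof: case analysis on the typing derivation, handling \texttt{Family} as a copy of the \texttt{Circuit} case (with the environment judgement extended by a \texttt{Family} clause, which you make explicit and the paper leaves implicit), and resolving the new \texttt{instance} case by matching the substituted $\lambda$-value produced by the semantic rule against the substituted \texttt{Circuit} type in the conclusion of the typing rule. The only minor difference is that the paper explicitly invokes \Cref{lem:pn} to reduce the index expressions in the $\tau=\beta$ dereference case, whereas you fold that same closed-index reasoning into the \texttt{instance} case and treat the dereference case as unchanged from typedQASM; this does not affect correctness.
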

\begin{proof}
The new \texttt{Family} case is effectively identical to the \texttt{Circuit} case. For the case where $\tau=\beta$, it suffices to note that by Lemma~\ref{lem:pn}, the expressions $I$ and $I'$ in the typing derivation reduce to integers $i, i'$ and the proof concludes as in the typedQASM case. 

Finally we have to revise the $\tau=\textnormal{\texttt{Circuit$(\tau_1,\dots,\tau_n)$}}$ case as we now have two possible derivations. The new case $E=\texttt{instance$(I_1, \dots, I_m)$ $E$}$ is also trivial as the only reduction produces a value of the form $\lambda x_1:\tau_1, \dots, x_n:\tau_n. U$. Note that the type $\tau$ in the derivation has $I_i$ substituted for index variables $y_i$, as in the conclusion of the reduction rule.
\end{proof}

\begin{lemma}[Progress (expressions)]
If $\cdot;\Gamma\vdash E:\tau$ and $\vdash \sigma:\Gamma$, then for any $\eta, \ket{\psi}$, $\langle E, \sigma, \eta, \ket{\psi}\rangle\Downarrow v$.
\end{lemma}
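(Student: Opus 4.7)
The plan is to mimic the typedQASM progress proof by induction on $E$, adding cases for the two new expression forms introduced in metaQASM. The variable case $E=x$ is unchanged: inverting the typing derivation gives $x:\tau\in\Gamma$, and $\vdash \sigma:\Gamma$ guarantees $x\in\textsf{dom}(\sigma)$, so the rule for variables fires and yields $\sigma(x)$.

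For $E=x[I]$ the typing derivation (combining the dereference rule with register subtyping) yields $\cdot;\Gamma\vdash x:\beta[I']$ together with the constraint $\cdot\models 0\leq I<I'$. The induction hypothesis and the preservation lemma for expressions give $\langle x,\sigma,\eta,\ket{\psi}\rangle\Downarrow (l_0,\dots,l_{I''})$ with $I''\geq I'$. To fire the dereference rule we still need to reduce $I$ to a concrete integer; the relevant tool is Lemma~\ref{lem:pn}, which applies once we check that any derivation $\cdot\vdash I:[I_1,I_2]$ forces $I$ to be closed and finite. This last point is a short subsidiary induction on the index typing rules: every rule introducing a free index variable $y$ requires $y\in\Delta$, and the symbol $\infty$ only ever appears on the right-hand side of an interval constraint, never as the body of a derivable index judgement. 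Once $\langle I,\sigma,\eta,\ket{\psi}\rangle\Downarrow i$, the constraint $\cdot\models 0\leq I<I'$ forces $i<I'\leq I''$, so the dereference rule produces $l_i$.

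The remaining case $E=\texttt{instance}(I_1,\dots,I_m)\,E'$ is simpler. The typing rule gives $\cdot;\Gamma\vdash E':\texttt{Family}(y_1,\dots,y_m)(\tau_1,\dots,\tau_n)$; the induction hypothesis together with the extended preservation lemma (whose new \texttt{Family} clause is exactly what is needed here) yields $\langle E',\sigma,\eta,\ket{\psi}\rangle\Downarrow \Pi y_1,\dots,y_m.\lambda x_1:\tau_1,\dots,x_n:\tau_n.U$, and the semantic rule for \texttt{instance} then fires directly, producing the lambda value with each $I_i$ substituted for $y_i$. Note that the $I_i$ themselves are not reduced at this step; they are simply plugged into the body as future index bindings, to be reduced (via Lemma~\ref{lem:pn}) if and when the resulting circuit is later invoked. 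The main obstacle in the overall argument is the closure lemma for indices used in the $x[I]$ case; beyond that the proof is structurally identical to the typedQASM version, modulo first invoking the extended preservation lemma to unpack the new \texttt{Family} value shape.
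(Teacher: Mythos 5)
Your proof is correct and follows essentially the same route as the paper: the \texttt{instance} case is handled identically (inversion to get the \texttt{Family} type, structural induction plus the extended preservation lemma to obtain the $\Pi$-value, then firing the semantic rule, with the $I_i$ merely substituted rather than reduced). Your explicit treatment of the $x[I]$ case via Lemma~\ref{lem:pn} is detail the paper leaves implicit---it delegates that index-reduction argument to its preservation lemma and only presents the new \texttt{instance} case here---so you have simply spelled out what the paper compresses.
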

\begin{proof}
Again, the new case $E=\texttt{instance$(I_1, \dots, I_m)$ $E$}$ needs consideration. By inversion we see that $E:\texttt{Family$(y_1,\dots, y_m)(\tau_1,\dots, \tau_n)$}$. By structural induction and the preservation lemma, $\langle E', \sigma, \eta, \ket{\psi}\rangle\Downarrow \Pi y_1,\dots, y_m.\lambda x_1:\tau_1, \dots, x_n:\tau_n. U$ and so $\langle E, \sigma, \eta, \ket{\psi}\rangle\Downarrow v$.
\end{proof}

\begin{lemma}[Progress (unitary stmts)]
If $\cdot;\Gamma\vdash U:\textnormal{\texttt{Unit}}$ and $\vdash \sigma:\Gamma$, then for any $\eta, \ket{\psi}$, $\langle U, \sigma, \eta, \ket{\psi}\rangle\Downarrow \ket{\psi'}$.
\end{lemma}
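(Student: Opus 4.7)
The proof is by structural induction on $U$, extending the typedQASM argument to handle the three new syntactic forms that appear at the unitary level: \texttt{instance$(I_1,\dots, I_m)$ $E$} sub-expressions (which are absorbed by the updated expression-progress lemma and need no new work here), \texttt{reverse $U$}, and \texttt{for $y=I_1..I_2$ do \{ $U$ \}}. Before attacking these cases, I would prove a routine index-substitution lemma: if $\Delta, y:[I,I']; \Gamma\vdash U:\texttt{Unit}$ and $\Delta\vdash i:[I,I']$, then $\Delta;\Gamma\vdash U\{i/y\}:\texttt{Unit}$, established by induction on the typing derivation.

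For the \texttt{reverse $U$} case, I would prove a companion progress lemma for the reverse reduction relation, stating that if $\cdot;\Gamma\vdash U:\texttt{Unit}$ and $\vdash\sigma:\Gamma$ then $\langle U,\sigma,\eta,\ket{\psi}\rangle\Uparrow \ket{\psi'}$ for some $\ket{\psi'}$. Each $\Uparrow$ rule has premises on strict structural sub-terms of its conclusion, under either $\Downarrow$ or $\Uparrow$ (e.g., the rule for \texttt{reverse $U$} itself flips the polarity back to $\Downarrow$), so the forward and reverse progress lemmas can be established simultaneously by mutual structural induction on $U$. With the companion lemma in hand, the \texttt{reverse $U$} case of the main lemma follows by a single application.

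For \texttt{for $y=I_1..I_2$ do \{ $U$ \}}, inversion of the typing rule gives bounds $I_1$ and $I_2$ typable in the empty index context, so by Lemma \ref{lem:pn} they reduce to concrete integers $i_1$ and $i_2$. I would then perform an inner induction on the nonnegative quantity $\max(0, i_2 - i_1 + 1)$. When $i_1>i_2$ the terminating reduction rule fires directly. Otherwise, the index-substitution lemma (applied with $\cdot\vdash i_1:[I_1,I_2]$, which is justified by the reduction of $I_1$ and the inequality $i_1\leq i_2$) yields $\cdot;\Gamma\vdash U\{i_1/y\}:\texttt{Unit}$, so the outer structural induction hypothesis on $U$ produces an intermediate state $\ket{\psi'}$; the inner induction hypothesis then handles the residual loop \texttt{for $y=i_1+1..i_2$ do \{ $U$ \}}, whose range length is strictly smaller.

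The main obstacle I expect is the \texttt{reverse} case, since $\Downarrow$ and $\Uparrow$ can be nested arbitrarily (a \texttt{reverse} inside the body of another \texttt{reverse} flips polarity back). The resolution is the mutual induction described above: because every premise of every $\Uparrow$ rule lies on a strict structural sub-term of its conclusion, the combined induction on $U$ remains well-founded, and each case of the companion lemma closely mirrors the corresponding forward case, so no genuinely new reasoning is required beyond bookkeeping.
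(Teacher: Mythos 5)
Your proposal is correct and follows essentially the same route as the paper's proof: a companion progress lemma for the reverse reduction relation (the paper states it ``follows similar to progress for unitary statements,'' leaving the mutual induction you spell out implicit), reduction of the loop bounds to integers via Lemma~\ref{lem:pn}, the same index-substitution lemma, and a well-founded induction on the shrinking loop range (the paper phrases this as ``each recursive call increases the lower bound'' rather than giving your explicit measure $\max(0, i_2-i_1+1)$). The only point the paper flags that you treat a bit casually is the auxiliary lemma that the value of an index expression lies within its typed interval, which you invoke implicitly when asserting $\cdot\vdash i_1:[I_1,I_2]$; the paper states this lemma explicitly and leaves it as an exercise.
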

\begin{proof}
The case $U=\texttt{reverse $U$}$ requires a separate progress lemma for reverse reduction, which follows similar to progress for unitary statements.

For the remaining case $U=\texttt{for $y=I_1..I_2$ do \{ $U$ \}}$, it suffices to observe that by inversion, $\cdot \vdash I_i:[I_i,I_i']$ and so both bounds reduce to integers. As each recursive call increases the lower bound $I_1$, and $I_2$ is necessarily finite, there can be no infinite chains of reductions. The only condition that needs checking is that $\langle U\{i_1/y\}, \sigma, \eta, \ket{\psi}\rangle \Downarrow \ket{\psi'}$, for which we need the following substitution lemma.

\begin{lemma}
If $\Delta, y:[I_1, I_2'];\Gamma\vdash U:\textnormal{\texttt{Unit}}$, and $\Delta\vdash i_1:[I_1,I_2']$ then $\Delta;\Gamma\vdash U\{i_1/y\}:\textnormal{\texttt{Unit}}$
\end{lemma}
To complete the proof, another lemma is needed stating that the result of evaluating an integer expression is within the bounds of the expression's type. We leave this as an easy exercise.
\end{proof}

\begin{lemma}[Progress (commands)]\label{lem:pcommandsfull}
If $\cdot;\Gamma\vdash C:\textnormal{\texttt{Unit}}$ and $\cdot;\cdot \vdash \sigma:\Gamma$, then for any $\eta, \ket{\psi}$, $\langle C, \sigma, \eta, \ket{\psi}\rangle\Downarrow \langle \sigma', \eta', \ket{\psi'}\rangle$.
\end{lemma}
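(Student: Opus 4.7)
The plan is to proceed by structural induction on the derivation $\cdot;\Gamma\vdash C:\texttt{Unit}$, directly paralleling the proof of progress for commands in typedQASM (\Cref{lem:pcommands}). Every case inherited from typedQASM — namely \texttt{creg}, \texttt{qreg}, \texttt{gate}, \texttt{measure}, \texttt{reset}, \texttt{if}, and sequencing — carries over essentially verbatim. In each such case I would invert the typing derivation to obtain typing of the sub-terms, invoke the already-established progress lemmas for metaQASM expressions and unitary statements to evaluate any embedded sub-terms, and, in the block-scoped declarations, extend $\sigma$ with the newly bound value and apply the inductive hypothesis to the inner command.

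The only genuinely new case is
\[
C = \texttt{family}(y_1,\dots,y_m)\ x(x_1:\tau_1,\dots,x_n:\tau_n)\ \{\,U\,\}\ \texttt{in}\ \{\,C'\,\}.
\]
By inversion I obtain the kind-correctness premises on the $\tau_i$, the judgement $y_1:[0,\infty],\dots,y_m:[0,\infty];\Gamma,x_1:\tau_1,\dots,x_n:\tau_n\vdash U:\texttt{Unit}$, and the derivation $\cdot;\Gamma,x:\texttt{Family}(y_1,\dots,y_m)(\tau_1,\dots,\tau_n)\vdash C':\texttt{Unit}$. The metaQASM reduction rule for family declarations bundles the body into the closure $\Pi y_1,\dots,y_m.\lambda x_1:\tau_1,\dots,x_n:\tau_n.U$ and reduces $C'$ in the extended environment. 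To invoke the inductive hypothesis on $C'$ I need
\[
\vdash \sigma[x\leftarrow \Pi y_1,\dots,y_m.\lambda x_1:\tau_1,\dots,x_n:\tau_n.U]:\Gamma,x:\texttt{Family}(y_1,\dots,y_m)(\tau_1,\dots,\tau_n),
\]
which requires adding a corresponding environment-typing rule for family values — a straightforward analogue of the rule for circuit closures in \Cref{fig:typing}, whose premise is exactly the typing of $U$ just obtained.

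The step I expect to cost the most care is the bookkeeping of the index context across the boundary imposed by the $\Pi$ binder. The hypothesis of the progress lemma fixes the outer index context to be empty, but the body $U$ of a family closure is typed under a nonempty index context. The invariant that makes the proof go through is that, during reduction of $C'$, the variables $y_i$ never escape from their closure: they are reached only through the \texttt{instance}$(I_1,\dots,I_m)$ rule, which by \Cref{lem:pn} fully evaluates each $I_j$ to an integer and then substitutes it for $y_j$, yielding a circuit value typed in the empty index context. Once this invariant is formalized — via an index-substitution variant of the substitution lemma already used in the progress proof for unitary statements, specialized to replacement of index variables by integer literals within the bounds recorded in $\Delta$ — the family case closes, and the remaining cases are immediate from the typedQASM proof.
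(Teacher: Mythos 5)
Your proof follows essentially the same route as the paper's: the paper likewise reduces everything to the single new \texttt{family} case and disposes of it exactly as the \texttt{gate} case of \Cref{lem:pcommands}, extending $\sigma$ with the closure value and applying the inductive hypothesis. The extra details you supply --- the environment-typing rule for $\Pi$-values and the index-context bookkeeping deferred to \texttt{instance} via \Cref{lem:pn} --- are points the paper leaves implicit or handles in its expression-level lemmas, so your argument is a correct, slightly more explicit rendering of the same proof.
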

\begin{proof}
We have one new command to check,
\[
	C=\texttt{family$(y_1,\dots, y_m)$ $x(x_1:\tau_1,\dots, x_n:\tau_n)$ \{ $U$ \} in \{ $C$ \}}.
\] The proof in this case is effectively identical to regular gate declaration.
\end{proof}

\begin{theorem}[Strong normalization]
If $\cdot;\cdot \vdash C:\textnormal{\texttt{Unit}}$, then \[\langle C, \emptyset, \lambda l.0, \ket{00\cdots}\rangle\Downarrow \langle \sigma, \eta, \ket{\psi}\rangle.\]
\end{theorem}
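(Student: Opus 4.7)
The plan is to derive the theorem as an immediate corollary of Lemma \ref{lem:pcommandsfull} (Progress for commands in metaQASM), mirroring the proof structure used for the analogous typedQASM theorem. All of the substantive reasoning about termination of \texttt{for} loops, unfolding of \texttt{reverse}, and reduction of \texttt{instance} expressions has already been discharged in the progress lemmas for indices, expressions, unitary statements, and commands; this theorem is essentially the packaging of those results applied to the canonical initial configuration.

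The one thing I would explicitly verify is that the hypotheses of Lemma \ref{lem:pcommandsfull} are met by the starting configuration $\langle C, \emptyset, \lambda l.0, \ket{00\cdots}\rangle$. The typing premise $\cdot;\cdot \vdash C:\texttt{Unit}$ is literally the hypothesis of the theorem. The environment premise $\cdot;\cdot \vdash \emptyset:\cdot$ follows immediately from the base environment-typing rule $\vdash \cdot:\cdot$, since the empty environment binds no variables. The classical heap $\lambda l.0$ and the initial quantum state $\ket{00\cdots}$ are unconstrained by the type system --- they are quantified universally in the statement of Lemma \ref{lem:pcommandsfull} --- so no further verification is required.

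With all preconditions met, a direct application of Lemma \ref{lem:pcommandsfull} yields a reduction of the form $\langle C, \emptyset, \lambda l.0, \ket{00\cdots}\rangle\Downarrow \langle \sigma, \eta, \ket{\psi}\rangle$, which is exactly the conclusion we want. There is essentially no obstacle at this stage: the only place where non-trivial termination arguments are required is inside the progress lemmas themselves (in particular, the \texttt{for}-loop case relies on Lemma \ref{lem:pn} to ensure finite bounds, and the \texttt{instance} and \texttt{reverse} cases rely on the corresponding preservation and substitution results). Those burdens have already been paid, so the theorem itself is a one-line corollary.
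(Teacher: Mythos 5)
Your proposal matches the paper's proof exactly: the theorem is obtained as a direct corollary of Lemma~\ref{lem:pcommandsfull}, instantiated at the initial configuration. Your extra care in checking that $\vdash \emptyset : \cdot$ holds and that the heap and quantum state are universally quantified is a welcome elaboration of the same one-line argument.
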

\begin{proof}
	Follows directly from \Cref{lem:pcommandsfull}
\end{proof}

\section{Conclusion}\label{sec:conclusion}

We have described a typed extension to openQASM that supports static array bounds checking, higher-order circuits, and lightweight metaprogramming in the form of size-indexed families of circuits. The resulting language is powerful enough to use for writing libraries of general quantum circuit families, such as for reversible arithmetic, while low-level enough to be used wherever openQASM is used.

As this is preliminary work, much remains to be done to make metaQASM a practical language for quantum library development. In particular, a concrete implementation needs to be developed, as do more examples of practical circuit families. A major question which remains is whether a decision procedure for the simple, non-linear integer constraints generated by our type system exists. 

Another interesting question for future work is whether \emph{parametrized resource counts} for algorithms can be computed directly from metaQASM programs. In particular, a desirable feature would be to compute closed-form formulas for the number of qubits, gates, etc., in an arbitrary instance of a circuit family, so that different implementations of the same circuit family can be analytically compared \emph{for any instance size}. Doing so would help not only with resource estimation, but also compilation by allowing compilers to automatically select the best implementation for a particular cost model.

\section*{Acknowledgements}
The author wishes to thank Gregor Richards for motivating this project and Frank Fu for pointing out alternative ways of typing several examples in this manuscript. The author also wishes to thank the anonymous reviewers for their detailed comments which have vastly improved the presentation of this work.

\bibliographystyle{splncs04}
\bibliography{main}

\end{document}